\newcommand{\eg}{\emph{e.g.}}
\newcommand{\smallconc}[2]{\begin{bsmallmatrix} #1 \\ #2 \end{bsmallmatrix}}
\newcommand{\IR}{\mathbb{IR}}
\newcommand{\calH}{\mathcal{H}}
\newcommand{\calI}{\mathcal{I}}
\newcommand{\calN}{\mathcal{N}}
\newcommand{\calO}{\mathcal{O}}
\newcommand{\bfw}{\mathbf{w}}
\newcommand{\bfI}{\mathbf{I}}
\newcommand{\sfG}{\mathsf{G}}
\newcommand{\ul}[1]{\underline{#1}}
\newcommand{\ulw}{\ul{w}}
\newcommand{\ulx}{\ul{x}}
\newcommand{\uly}{\ul{y}}
\newcommand{\ol}[1]{\overline{#1}}
\newcommand{\olw}{\ol{w}}
\newcommand{\olx}{\ol{x}}
\newcommand{\oly}{\ol{y}}
\theoremstyle{definition}
\newtheorem{theorem}{Theorem}
\newtheorem{definition}{Definition}
\newtheorem{proposition}{Proposition}
\newtheorem{example}{Example}
\newtheorem{lemma}{Lemma}
\theoremstyle{remark}
\newtheorem{remark}{Remark}
\newcommand{\revthree}[1]{{#1}}
\newcommand{\revsixteen}[1]{{#1}}
\newcommand{\revtwentyone}[1]{{#1}}
\newcommand{\revtwentytwo}[1]{{#1}}
\newcommand{\edit}[1]{{#1}}
\newcommand{\IH}{\mathcal{I}_\mathcal{H}}
\newcommand{\IHLP}{\mathcal{I}_\mathcal{H}^{\rm LP}}
\newcommand{\IHSS}[1]{\mathcal{I}_{\mathcal{H}, #1}^{\rm SS}}
\newcommand{\kw}[1]{\lstinline!#1!}
\newcommand\perm[2]{\prescript{#1}{}P_{#2}}
\newcommand{\basisvec}[1]{\revsixteen{\delta_{#1}}}
\newcommand{\sampleparammat}{\revtwentyone{\Gamma}}
\definecolor{mygreen}{rgb}{0.1, 0.7, 0.1}
\newcommand{\removelatexerror}{\let\@latex@error\@gobble}
\title{\LARGE \bf
% Subspace Sampling for Interval Refinement
% Scalable Interval-Based Reachability Computation
% Scalable Safety Verification via Interval-Based Reachability Analysis
% Scalable, Automatic Interval Refinement via Subspace Sampling 
% Optimally Exploiting Auxiliary Variables for Interval Refinement
Automatic and Scalable Safety Verification using Interval Reachability with Subspace Sampling
% Scalable Safety Verification using Interval Analysis with Subspace Sampling
}
\author{Brendan Gould, Akash Harapanahalli, and Samuel Coogan$^{1}$% <-this % stops a space
\thanks{*This work was supported in part by the National Science Foundation under awards \#2219755 and \#2333488 and by the Air Force Office of Scientific Research under Grant FA9550-23-1-0303.}% <-this % stops a space
\thanks{$^{1}$The authors are affiliated with the School of Electrical and Computer Engineering, Georgia Institute of Technology,
        Atlanta, GA 30332, USA
        {\tt\small \{bgould6, aharapan, sam.coogan\}@gatech.edu}.%
}}
\begin{document}

\maketitle
\thispagestyle{empty}
\pagestyle{empty}

%%%%%%%%%%%%%%%%%%%%%%%%%%%%%%%%%%%%%%%%%%%%%%%%%%%%%%%%%%%%%%%%%%%%%%%%%%%%%%%%
\begin{abstract}

% Interval analysis is a modern technique to verify the safety of partially unknown systems by computing a guaranteed over-approximation of the systems' reachable sets. 
% This technique is both flexible, able to represent many types of uncertainty including unknown initial conditions and persistent disturbances, and fast, enjoying computational efficiency due to the simplicity of the reachable set representation. 
% Traditional interval based approaches suffer from excessive conservatism, but by \emph{lifting} the system to a higher dimension using new auxiliary variables, we can exploit the structure of the lifted system to greatly reduce this conservatism through \emph{interval refinement}. 
Interval refinement is a technique for reducing the conservatism of traditional interval based reachability methods  by \emph{lifting} the system to a higher dimension using new auxiliary variables and exploiting the introduced structure \revtwentyone{through a \emph{refinement} procedure}. 
%Prior approaches leveraged specialized domain knowledge to choose refinement parameters, a process which we \revtwentyone{help } automate in this work. 
We present a novel\revtwentyone{, efficiently scaling, automatic } refinement strategy based on a subspace sampling argument \revtwentyone{and motivated by reducing the number of interval operations through sparsity}.  
\revtwentyone{
Unlike previous methods, we guarantee that refined bounds shrink as additional auxiliary variables are added. 
This additionally encourages automation of the lifting phase by allowing larger groups of auxiliary variables to be considered. 
}
We implement our strategy in JAX, a high-performance computational toolkit for Python and demonstrate its efficacy on several examples, including regulating a multi-agent platoon to the origin while avoiding an obstacle. 

\end{abstract}

%%%%%%%%%%%%%%%%%%%%%%%%%%%%%%%%%%%%%%%%%%%%%%%%%%%%%%%%%%%%%%%%%%%%%%%%%%%%%%%%
\section{Introduction}
\label{sec:intro}

Engineered systems are subject to a wide variety of uncertainties, including measurement errors, mechanical failures, and unmodeled dynamics.
These uncertainties may cause the realized behavior of the system to deviate significantly from its nominal behavior, which is unacceptable in safety-critical systems. 
To combat this problem, many approaches have been developed to verify the safety of a system in the presence of such uncertainty. 
One such strategy is \emph{reachable set over-approximation}~\cite{XC-SS:22,ARCH-COMP24:CHSNL}, where safety is guaranteed for an entire set containing all possible states a system may attain under admissable uncertainties and disturbances.

% The core goal of each of these approaches is to guarantee that the behavior of the system satisfies some criterion (representing ``safety'') for any admissable form the uncertainty could take. 
% One popular safety verification technique is \emph{reachable set over-approximation}. 
% In this technique, a set containing every state that a system may attain over a given time period is computed. 
% If this set is entirely contained in a safe region, then surely any trajectory of the analyzed system will be as well, and so safety is verified. 

The set representation chosen for this over-approximation is of great importance---typically resulting in a tradeoff between bound computation time and accuracy. 
Some commonly used set representations include  zonotopes~\cite{CORA}, constrained zonotopes~\cite{ConstrainedZono:2016}, hybrid zonotopes~\cite{HybridZono,zonoLAB}, polytopes~\cite{harwood_efficient_2016}, \revtwentyone{Taylor } models~\cite{JuliaReach}, ellipsoids~\cite{ellipsoids}, and generalized star sets~\cite{StarSets:2016}.
Intervals~\cite{Jaulin:2001} are a particularly simple (and therefore fast to compute) representation of sets, and multiple approaches~\cite{shen_rapid_2017, jafarpour2024efficient} have been proposed for efficient interval reachable set computation.
The efficiency of interval methods has allowed them to, \eg, be incorporated directly to train neural networks controllers with certified robustness guarantees~\cite{harapanahalli_certified_2024}.
Other reachability approaches have also been proposed, like Hamilton-Jacobi reachability~\cite{HJ-reach:2017}, level set methods~\cite{LevelSet:2000}, and contraction-based reachability~\cite{maidens_reachability_2015}.

The simplicity of interval-based approaches causes them to suffer from excessive over-conservatism through the wrapping effect, \revthree{preventing them from producing useful bounds in many cases.}
\revtwentyone{
Recent work~\cite{shen_rapid_2017} has developed methods to mitigate this conservatism by artificially introducing additional structure to a system through model redundancies, which was used in~\cite{harapanahalli_certified_2024} to study invariant polytopes in neural network controlled systems. 
In Section~\ref{sec:reach}, we recall this process through the lens of a lifted system, and describe how domain knowledge can be used to select which lifted structure to create. 
The new structure can be exploited to \emph{refine} bounds for the lifted system, often giving tighter estimates than standard interval reachability. 

In Section~\ref{sec:sampling_strat}, we present the main contribution of this work: a novel, automated refinement strategy based on sampling certain subspaces. 
Through clever parameterization of these subspaces, we induce a desirable sparsity pattern, which heuristically reduces conservatism by cancelling interval operations performed during refinement.
Furthermore, we explicitly bound the growth of refinement costs in both the size of the original system and the amount of added auxilliary variables with Proposition~\ref{prop:sample_growth}, and show how refined bounds are monotonically shrinking as additional redundancies are introduced in Theorem~\ref{thm:bound_monotonicity}. 
These results together represent a step towards the automation of interval refinement, allowing engineers to quickly test many alternative lifting strategies. 
}

% In Section~\ref{sec:reach}, we generalize existing refinement approaches by identifying the base system with a \emph{lifted system}, specified uniquely by a full rank matrix and any of its left inverses. 
% % This lifted system exhibits a structure which we are able to exploit, refining the interval-based bounds. 
% In Section~\ref{sec:sampling_strat}, we present our main result: a method to \revtwentyone{automatically ``precondition'' a lifted system to best take advantage of its structure.
% Compared to previous work~\cite[Section 5.1]{shen_rapid_2017}, our method scales more efficiently with system dimension and is slightly more automated. 
% %Previously, niche domain knowledge was used to choose ``preconditioning heuristics'' needed to effectively take advantage of the structure of the lifted system~\cite[Section 5.1]{shen_rapid_2017}. 
% %We develop an approach based on sampling certain subspaces that replaces this preconditioning step without the need for such domain knowledge. 
% The main insight of our approach is that preconditioning parameters can be sampled from certain subspaces rather than chosen from the full lifted state space. }
% Furthermore, we characterize the runtime complexity of our approach (Proposition~\ref{prop:sample_growth}) and prove that refined bounds are monotonically shrinking as more lifted states are used (Theorem~\ref{thm:bound_monotonicity}). 
Section~\ref{sec:examples} demonstrates our technique on several examples. 
\revtwentyone{
First, we present a framework that fully automates the lifting and refinement for two dimensional systems. 
%We show that the sampling based method can outperform another refinement technique due to its superior scaling characteristics, and successfully apply it to
Next, we show how lifted systems can use some domain knowledge with our automatic refinement procedure to successfully }compute useful reachable set bounds on concentrations of a chemical reaction and positions of agents in a multi-agent platoon. 
Finally, Section~\ref{sec:conclusion} gives a brief conclusion.

% TODO~\cite{harapanahalli_immrax_2024}
% Is the refinement just applicable to system dynamics, or can we frame it as more generally evaluating inclusion functions on polytopes?
%  - Good if the second one, makes the message of the paper easier 
%  - Maybe we preface with: ``In case you've never noticed, any dynamical system can be one-to-one matched with a lifted system, for every $H, H^+$''
%  - It is a good idea to do this in reachable set computation, for a few reasons (Joe Scott, Polytope Invariance)
%  - Lots of buildup. Need to communicate: 
%  	- Problem is not hyper-niche, it is important for interesting applications
% 	- The contribution of this paper is an important technical improvement to the above strategy

\subsection{Notation}
\label{subsec:notation}

We use $\basisvec{i}$ to denote the $i$th standard Euclidean basis vector.
% (i.e. the vector with all zero components except a one in the $i$th component). 
In all other cases, vector subscripts \edit{denote indexing}: $y_j$ is the $j$th component of the vector $y$. 
\edit{Let $\leq$ be the elementwise order on $\R^n$ so that $x \leq y $ when $x_i \leq y_i$ for every $i$. 
Let $[\uly, \oly] := \{y : \uly \leq y \leq \oly\}$ be a closed interval subset of $\R^n$ and $\IR^n$ denote the set of all $n$-dim intervals. }
The symbol $\perm{m}{n}$ is the set of \revtwentytwo{permutations of $n$ elements from a list of size $m$}, and $\lvert \perm{m}{n} \rvert$ is the size of this set. 
% Given $x, y \in \R^n$, $x_{i:y} \in \R^n$ is the vector $(x_{i:y})_j = x_j$ for every $j \ne i$ and $(x_{i:y})_i = y_i$.
% Matrix of zeros with shape: $0^{N \times n}$.

\section{Polytope Reachability}
\label{sec:reach}

In this section, we present an approach for over-approximating the reachable sets of continuous-time systems. 
Our technique is a generalization of the procedure presented in~\cite{shen_rapid_2017}, using interval arithmetic with a refinement step. 
% By considering linear combinations of the state variables of the system, we are able to propagate reachable sets in the shape of any polytope. 

\subsection{Lifting and Refinement}

Consider the nonlinear system
\begin{align}
\label{eq:sys}
    \dot{x} = f(x,w),
\end{align}
for $x\in\R^n$ as the state, $w\in\R^p$ as a disturbance, and continuous $f:\R^n\times\R^p\to\R^n$ locally Lipschitz in $x$ for fixed $w$.
Let $H\in\R^{m\times n}$ \revtwentyone{(with $m > n$) } be full rank and $H^+$ satisfy $H^+ H = \bfI_n$. 
We define the $(H,H^+)$-\emph{lifted system}: 
\begin{equation}
    \label{eq:lsys}
    \dot{y} = Hf(H^+y, w).
\end{equation}
% with the equivalence $y = Hx \in\R^m$.
\revtwentytwo{Since $H$ is full rank, it must have at least $n$ linearly independent rows. 
Without loss of generality, we assume that the first $n$ rows of $H$ form an invertible block $H_V$ (allowing us to recover $x$ from $y$) and call the remaining block $H_A$.}
\revtwentyone{$H_V$ acts as a state transformation on the \emph{base states} $x \in \R^n$, $H_A$ is another transformation giving a vector of \emph{auxiliary states}, and their concatenation gives the \emph{lifted states} $y \in \R^m$. }
% \revtwentytwo{The invertibility of $H_V$ guarantees that $x$ can always be recovered from $y$. }
% We say that $x \in \R^n$ is the vector of \emph{base states}, $y \in \R^m$ is the vector of \emph{lifted states}, and the entries corresponding only to $y$ and not to $x$ are the \emph{auxiliary states}. 

% The lifted system obeys a particular structure that we are able to exploit, namely that t
By~\cite[Proposition 11]{harapanahalli_certified_2024}, the linear subspace $\mathcal{H} = \{Hx : x \in \R^n\}$ is forward invariant for the lifted dynamics~\eqref{eq:lsys}.
% Thus, we which provides additional structure when computing reachable sets. 
% This gives an additional criterion any of its trajectories must satisfy, which can be combined with other bounds in order to tighten them. 
% This provides an additional criterion which can be used to tighten interval bounds in the lifted space.
Next, we recall the definition of an interval refinement operator~\cite{shen_rapid_2017}, which will allow us to exploit this information.
% If an interval bound $y(t) \in [\uly, \oly]$ is known for the state of the lifted system at any time $t$, a \emph{refinement operator} may be used to tighten this bound by applying the subspace condition. 

\begin{definition}[Interval refinement operator {\cite{shen_rapid_2017}}]
    Given $\mathcal{S} \subseteq \R^m$, we say $\mathcal{I}_\mathcal{S}: \IR^m \to \IR^m$ is an \emph{interval refinement operator} if for every interval $[\uly, \oly] \in \IR^m$ we have 
    \begin{equation}
        \label{eq:refinement_op}
        \mathcal{S} \cap [\uly, \oly] \subseteq \mathcal{I}_\mathcal{S}([\uly, \oly]) \subseteq [\uly, \oly].
    \end{equation}
\end{definition}

In our context, $\IH$ tightens known bounds on lifted system trajectories; if $y(t) \in [\uly, \oly]$, then also $y(t) \in \IH([\uly, \oly])$. % , since $\mathcal{H}$ is forward invariant and $\IH$ is a refinement operator. 
$\IH([\uly, \oly])$ is often significantly smaller than $[\uly, \oly]$ itself, improving bounds. 
Below, we present two refinement operators for $\mathcal{H}$, varying in tightness and computational complexity. 

The tightest refinement operator for a lifted system is 
 $\IHLP := [\underline{\IHLP}([\uly, \oly]), \overline{\IHLP}([\uly, \oly])]$, where 
\begin{align}
    \underline{\IHLP}([\uly, \oly])_j &= \min_{x \in \R^n} \basisvec{j}^\top H x \ \text{ s.t. } \ \uly \le Hx \le \oly, \label{eq:IHLP_lower} \\ 
    \underline{\IHLP}([\uly, \oly])_j &= \max_{x \in \R^n} \basisvec{j}^\top H x \ \text{ s.t. } \ \uly \le Hx \le \oly, \label{eq:IHLP_upper}
\end{align}
for each $j \in \{1, 2, \ldots, m\}$, as originally proved in~\cite{harwood_efficient_2016}.
To refine an interval in $\IR^m$ with $\IHLP$, $2m$ linear programs must be solved. 
As the number of states in the lifted system~\eqref{eq:lsys} grows, this computation becomes burdensome.% , especially since we would like to perform refinement as often as possible while computing a trajectory. 

To remedy this issue, we consider another refinement operator. 
For any $N \in \N$, let $A \in \R^{N \times m}$ such that $AH = 0$, and define the \emph{sampling strategy} refinement operator:
\begin{equation}
    \label{eq:IHSS}
    \IHSS{A}([\uly, \oly])_j = [\uly_j, \oly_j] \hspace{-3mm} \bigcap_{\begin{smallmatrix} i \in \{1, \dots, N\},  \\ A_{i, j} \neq 0 \end{smallmatrix}} \hspace{-3mm}- \frac{1}{A_{i, j}} \sum_{k \ne i} A_{i, k}[\uly_k, \oly_k],
\end{equation}
for all $j \in \{1, 2, \ldots, m\}$.
By~\eqref{eq:refinement_op}, if $\calH \cap [\uly, \oly] = \emptyset$, then~\eqref{eq:IHSS} is vacuous and we may choose any output, \eg, the singleton set $\IHSS{A}([\uly, \oly]) = [\uly, \uly]$.
\revthree{This generalizes a refinement operator proposed in~\cite{shen_rapid_2017} and used in~\cite[Appendix 3]{harapanahalli_certified_2024} by allowing $A$ to be any matrix in the left null space of $H$ rather than just a basis.} 
\revsixteen{Our contribution is a clever parameterization of $A$, allowing us to theoretically characterize both the complexity (Proposition~\ref{prop:sample_growth}) and tightness (Theorem~\ref{thm:bound_monotonicity}) of~\eqref{eq:IHSS}.}
% Technically,~\eqref{eq:IHSS} is nonempty when the set $\calH \cap [\uly,\oly] \neq \emptyset$---if this is not the case, we can choose any output which is a subset of $[\uly,\oly]$, \eg, the singleton set $[\uly,\uly]$.

% \subsubsection*{Proof Sketch ($\IHSS{A}$ is a refinement operator)} 
% Suppose $\calH\cap[\uly,\oly] \neq \emptyset$ and consider any $y \in \mathcal{H}$ with $y \in [\uly, \oly]$. 
% Let $a$ be any row of $A$, and note that $ay = aHx = 0$ for some $x \in \R^n$ since $y \in \mathcal{H}$. 
% Then, if $a_j \ne 0$, algebra gives that 
% \begin{equation*}
%     y_j \in -\frac{1}{a_j} \sum_{i \ne j} a_i [\uly_i, \oly_i].
% \end{equation*}
% Since each row of $A$ may provide bounds in this way, we simply take the intersection to obtain~\eqref{eq:IHSS}. \hfill \qed

Though the bounds provided by $\IHSS{A}$ may not be as tight as those of $\IHLP$, they are computed by simple arithmetic operations and therefore much faster.
% Furthermore, $\IHSS{A}$ can be tuned by the selection of both $N$ and $A$; more rows may give a tighter refinement, but require more computation. 
In Section~\ref{sec:sampling_strat}, we discuss a principled approach for selecting $A$ and show that it exhibits desirable growth properties. 

\subsection{Embedding Systems}
% Akash - could you write from here to the end of the section? 
% I have written an outline. 

As in the previous section, let $\dot{y} = g(y,w) := Hf(H^+y,w)$ be an $(H,H^+)$-lifted system, and let $\calI_\calH$ be a refinement operator on $\calH \subseteq\R^m$. 
An \emph{inclusion function} $\sfG$ for $g$ is a mapping $\sfG = [\ul\sfG,\ol\sfG]:\IR^{n}\times\IR^{m}\to\IR^n$ satisfying $g(x,w) \in \sfG([\ulx,\olx],[\ulw,\olw])$ for all $x\in[\ulx,\olx]$, $w\in[\ulw,\olw]$, or
\begin{align*}
    \ul\sfG([\uly,\oly],[\ulw,\olw]) \leq g(y,w) \leq \ol\sfG([\uly,\oly], [\ulw,\olw]).
\end{align*}
Inclusion functions are easily constructed through composition of intermediate operations~\cite{Jaulin:2001}.
For instance, \verb|immrax|~\cite{harapanahalli_immrax_2024} is an efficient toolbox for JAX that automatically constructs inclusion functions for general dynamics. %, which we use in the examples below.
\revtwentyone{%
Throughout, we use a natural inclusion function $\sfG$~\cite{Jaulin:2001} which \texttt{immrax} obtains by algorithmically replacing nodes in a computation tree with their interval extensions~\cite{harapanahalli_immrax_2024}.
}

Using $\sfG$, we define a \emph{lifted embedding system} by composing the interval refinement operator with the inclusion function evaluated on \revtwentytwo{each lower ($[\uly,\oly_{i:\uly}]$) and upper ($[\uly_{i:\oly},\oly]$) face of the hyperrectangle 
$[\uly,\oly]$\footnote{ For $x, y \in \R^n$, $x_{i:y} \in \R^n$ is the vector \emph{replacing} the $i$th component of $x$ with that of $y$: $(x_{i:y})_j = x_j$ for every $j \ne i$ and $(x_{i:y})_i = y_i$. }, for every $i\in\{1,\dots,m\}$,}
\begin{subequations} \label{eq:lifted_embsys}
\begin{align} 
    \dot{\uly}_i &= \ul\sfG_i(\IH([\uly,\oly_{i:\uly}]),[\ulw,\olw]), \\
    \dot{\oly}_i &= \ol\sfG_i(\IH([\uly_{i:\oly},\oly]),[\ulw,\olw]).
\end{align}
\end{subequations}
% Technically, we consider the state of the system as a point in the subset $\{(\uly,\oly)\in\R^{2n} : \uly \leq \oly\}$, however, for notational simplicity we simply write $[\uly,\oly]$ as the state.
\revtwentyone{
Under some mild regularity assumptions, it follows from~\cite[Thm. 2]{scott_barton_reach:2013} that for any $x_0\in\{x\in\R^n : \uly_0\leq Hx\leq\oly_0\}$, and $t\mapsto\bfw(t)\in[\ul\bfw(t),\ol\bfw(t)]$, we have
\begin{align*}
    % x(t) \in \<H,\uly(t),\oly(t)\>,
    x(t) \in \{x\in\R^n : \uly(t)\leq Hx\leq\oly(t)\},
\end{align*}
where $t\mapsto x(t)$ is the trajectory of~\eqref{eq:sys} with initial condition $x_0$ and input map $\bfw$, and $t\mapsto [\uly(t), \oly(t)]$ is the trajectory of any $(H,H^+)$-lifted embedding system~\eqref{eq:lifted_embsys} from initial condition $[\uly_0, \oly_0]$ under input map $[\ul\bfw, \ol\bfw]$.
}

% \begin{proposition}[Polytope reachability]
% \label{prop:traj_subspace}
% Consider the system $\dot{x} = f(x,w)$ from~\eqref{eq:sys}, let $H\in\R^{m\times n}$ be a full rank matrix, and let $[\ulw,\olw]\subseteq\R^p$ be a nonempty compact set.
% % Let $x_0\in\<H,\uly_0,\oly_0\>$. 
% % Let $\<H,\uly_0,\oly_0\>$ be a polytope and $t\mapsto(\ul\bfw(t),\ol\bfw(t))$ be a measurable curve in $\Tgeq^{2p}$. 
% Then for any $x_0\in\{x\in\R^n : \uly_0\leq Hx\leq\oly_0\}$, and measurable $t\mapsto\bfw(t)\in[\ulw,\olw]$, we have
% \begin{align*}
%     % x(t) \in \<H,\uly(t),\oly(t)\>,
%     \phi(t,x_0,\bfw) \in \{x\in\R^n : \uly(t)\leq Hx\leq\oly(t)\},
% \end{align*}
% where $t\mapsto(\uly(t),\oly(t))$ is the trajectory of any $(H,H^+)$-lifted embedding system using any inclusion function $\sfG$.

% \revtwentyone{for any disturbance map $t\mapsto[\ul\bfw(t),\ol\bfw(t)]$, $\phi(t, x_0, \bfw)$ is the trajectory of~\eqref{eq:sys} starting from $x_0$ and } $[\uly(t),\oly(t)]$ is the trajectory of any corresponding $(H,H^+)$-lifted embedding system~\eqref{eq:lifted_embsys} using any inclusion function $\sfG$.
% \end{proposition}

% The proof of Proposition~\ref{prop:traj_subspace} follows from~\cite[Thm. 2]{scott_barton_reach:2013}, under regularity assumptions of $\IH$ and $\sfG$ not listed here.

\begin{remark}
% While Proposition~\ref{prop:traj_subspace} is similar to~\cite[Theorem 1]{shen_rapid_2017}, we note that reachable set estimates depend on the choice of $H^+$ in~\eqref{eq:lsys}.
% This lifted system reachable polytope result is similar to~\cite[Theorem 1]{shen_rapid_2017}.
\revtwentyone{
In~\cite{shen_rapid_2017}, the authors tighten bounds by re-expressing dynamics in terms of the lifted variables, which is represented in our lifted system framework through different left inverses $H^+$~\cite[Remark 15]{harapanahalli_certified_2024}.
If $H = \begin{bsmallmatrix} 1 & 0 \\ 0 & 1 \\ 1 & 1 \end{bsmallmatrix}$, the choice $H^+ = \begin{bsmallmatrix} 1 & 0 & 0 \\ 0 & 1 & 0 \end{bsmallmatrix}$ keeps the first $2$ components of the lifted dynamics the same, while $H^+ = \begin{bsmallmatrix} 1 & 0 & 0 \\ -1 & 0 & 1\end{bsmallmatrix}$ replaces $x_2$ with $-x_1 + y_3$.
However, we note that the choice of $H^+$ does not directly result in the term cancellations also proposed in~\cite{shen_rapid_2017}.
For the rest of this paper, we use $H^+ = \begin{bmatrix} H_V^{-1} & 0_{n, m-n} \end{bmatrix}.$
}
% The authors note in~\cite{shen_rapid_2017} that estimates depend on algebraic simplifications for the lifted dynamics, which are accounted for more precisely in our framework through different parameterizations of $H^+$~\cite[Remark 15]{harapanahalli_certified_2024}.
\end{remark}

\section{Refinement Strategy}
\label{sec:sampling_strat}

In the previous section, we defined the sampling strategy refinement operator $\IHSS{A}$ but left the parameters $N \in \N$ and $A \in \R^{N \times m}$ in~\eqref{eq:IHSS} undetermined. 
There are many suitable $A$ matrices, and importantly each may yield different refined bounds. 
% Furthermore, increasing $N$ (\emph{i.e.} adding rows to $A$) may yield tighter bounds, but also increases the amount of computation to be done. 
We therefore wish to characterize choices of these parameters that produce good refinements efficiently. 

\revtwentyone{
First, we show that when auxiliary variables are considered individually, the optimal refinement can be parameterized by a single $a \in \R^{m-n}$. 
Combining these vectors for each auxiliary variable gives a basis for the left null space of $H$ with desirable properties that are made clear in Example~\ref{ex:basis}. 
}

\begin{example}[\edit{Basis } computation]
\label{ex:basis}
Given a matrix $H$, the MATLAB code \kw{null(H.').'} uses SVD to compute $L_{\rm M}(H)$ with rows forming a basis for the left null space of $H$. 
Applying $L_{\rm M}$ to two very similar matrices, we see
{
\begin{align*}
    L_{\rm M}\left( \hspace{-0.5mm}
    \begin{bsmallmatrix}
        1 & 0 \\
        0 & 1 \\
        1 & 1
    \end{bsmallmatrix}
    \hspace{-0.5mm} \right) &= \begin{bsmallmatrix}
        -0.5774 & -0.5774 & 0.5774
    \end{bsmallmatrix}, \\ 
    L_{\rm M}\left( \hspace{-0.5mm}
    \begin{bsmallmatrix}
        1 & 0 \\
        0 & 1 \\
        1 & 1 \\
        1 & 0.5 
    \end{bsmallmatrix}
    \hspace{-0.5mm} \right) &= \begin{bsmallmatrix}
  -0.9957 & -0.7071 &  0.4184 & 0.5773 \\ 
   0.0917 & -0.4082 &  0.9082 & -1.0000
    \end{bsmallmatrix} .
\end{align*}
}%
By contrast, the bases computed in Algorithm~\ref{alg:null_basis_construction}, Line~\ref{alg:null_basis_construction:basis} are
{
\begin{align*}
    L\left( \hspace{-0.5mm}
    \begin{bsmallmatrix}
        1 & 0 \\
        0 & 1 \\
        1 & 1
    \end{bsmallmatrix}
    \hspace{-0.5mm} \right) &= \begin{bsmallmatrix}
        -1 & -1 & 1
    \end{bsmallmatrix}, \quad
    L\left( \hspace{-0.5mm}
    \begin{bsmallmatrix}
        1 & 0 \\
        0 & 1 \\
        1 & 1 \\
        1 & 0.5 
    \end{bsmallmatrix}
    \hspace{-0.5mm} \right) = \begin{bsmallmatrix}
  -1 & -1 &  1 & 0 \\ 
   -1 & -0.5 &  0 & 1
    \end{bsmallmatrix} .
\end{align*}
}%
\end{example}

\revtwentyone{
Note that with Algorithm~\ref{alg:null_basis_construction}, the first basis appears as a sub-block of the second (but not with $L_{\rm M})$.
This is the key property that ensures adding auxiliary variables does not worsen refinement (Theorem~\ref{thm:bound_monotonicity}).
Furthermore, the right block of our bases is an identity matrix, forcing sparsity (especially when $m-n$ is large). 
This causes many terms in~\eqref{eq:IHSS} to cancel entirely, reducing the number of interval operations, each of which contribute to the conservatism of the refinement. 

Finally, we generate additional rows for $A$ by combining each \emph{pair} of optimal refinement vectors in $L$. 
This gives an $N$ (and induced complexity of~\eqref{eq:IHSS}) much lower than naive, uniform sampling the left null space of $H$ (Proposition~\ref{prop:sample_growth}).
}

% Unfortunately, for real systems, the shape of the interval bound is time-varying. 
% Therefore, choices of refinement vector that were initially optimal can yield very poor refinements later in the trajectory. 
% In numerical computation, this can be remedied by re-solving~\eqref{eq:max_lbound} and \eqref{eq:min_ubound} for new values of $a_*$ and $a^*$ at each timestep. 

% Prior work in interval refinement~\cite{shen_rapid_2017} leveraged domain knowledge of the example problems to choose good values for these parameters. 
% By contrast, our method is applicable to a wide variety of systems, without the use of specialized knowledge. 
% We present a novel method for choosing $N$ and $A$ in~\eqref{eq:IHSS} that effectively balances the tradeoff between bound tightness and computational effort. 
% Empirically, our parameter choices enable~\eqref{eq:IHSS} to produce bounds that are nearly as tight as those of~\eqref{eq:IHLP_lower}-\eqref{eq:IHLP_upper} in a fraction of the computation time. 
% Further, our choices make the refinement~\eqref{eq:IHSS} \emph{scalable} to large or multi-agent systems; we provide guarantees on both bound tightness and computational complexity in the number of lifted states (Proposition~\ref{prop:sample_growth} and Theorem~\ref{thm:bound_monotonicity}, respectively). 
% Finally, we are able to formally guarantee that the bounds produced by our strategy cannot worsen when adding a new lifted state 

\subsection{Algorithm~\ref{alg:null_basis_construction}: Subspace Sampling}

% Since we hope to apply our refinement procedure to general systems with no \emph{a priori} structure, we use a sampling based approach. 
% To construct $A$, a naive approach would be to sample $N$ vectors from the left null space of $H$ and stack them into a matrix. 
% We describe several improvements that can be made to this idea and demonstrate the scaling properties of the final sampling strategy. 

First, we parameterize \edit{the sample space of $A$}, applying the assumption in~\eqref{eq:IHSS} that $AH = 0$. 

\begin{proposition}
    \label{prop:null_dim_reduction}
    Let $H \in \R^{m \times n}$ be full rank and $L^\top\in\R^{m\times (m-n)}$ be a basis for the left null space $\calN(H^\top)$.
    % $\calN(H) \in \R^{(m-n) \times m}$ be any matrix with rows forming a basis for the left null space of $H$. 
    If $N \in \N$ and $A \in \R^{N \times m}$ with $AH = 0$, there exists a $\sampleparammat \in \R^{N \times (m-n)}$ with $A = \sampleparammat L$.
\end{proposition}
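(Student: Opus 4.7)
The plan is to reduce the claim to the elementary observation that the rows of $A$ lie in the left null space of $H$, which is by hypothesis spanned by the rows of $L$. Routine bookkeeping then assembles the coefficients of these spanning expansions into the matrix $\Gamma$.

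More concretely, I would first translate the hypothesis $AH = 0$ into a row-by-row statement: writing $a_i^\top$ for the $i$th row of $A$, the condition $AH = 0$ is equivalent to $H^\top a_i = 0$ for every $i \in \{1, \ldots, N\}$, i.e., $a_i \in \mathcal{N}(H^\top)$. Next, I would invoke the basis property: since the columns of $L^\top$ form a basis of $\mathcal{N}(H^\top)$, each $a_i$ admits a unique expansion $a_i = L^\top \gamma_i$ for some $\gamma_i \in \mathbb{R}^{m-n}$. Equivalently, $a_i^\top = \gamma_i^\top L$.

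Finally, I would stack the $\gamma_i^\top$ as the rows of a matrix $\Gamma \in \mathbb{R}^{N \times (m-n)}$ and read off that $A = \Gamma L$, as desired. The dimension $m - n$ of $\mathcal{N}(H^\top)$, which makes the width of $\Gamma$ come out correctly, is guaranteed by the full rank assumption on $H$ via the rank--nullity theorem.

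There is no real obstacle here; the statement is essentially the definitional content of what it means for the rows of $L$ to be a basis of the relevant subspace. The only thing to be careful about is the transposition conventions — $L^\top$ is the matrix whose \emph{columns} form the basis, while $L$ is used as a right factor whose \emph{rows} form the basis — but once this is made explicit, the argument is one line.
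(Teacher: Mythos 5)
Your proposal is correct and follows essentially the same route as the paper's own proof: row-by-row membership of $A_i$ in the left null space, expansion in the basis given by the rows of $L$, and stacking the coefficients into $\Gamma$. The added remark about rank--nullity fixing the dimension $m-n$ is a harmless (and accurate) clarification.
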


\begin{proof}
    Since $AH = 0$, every row $A_i$, $i \in \{1, 2, \ldots, N\}$ must lie in the left null space of $H$. 
    By assumption, there exist scalars $\sampleparammat_{i, 1}, \ldots \sampleparammat_{i, m-n}$ with $A_i = \sum_{j=1}^{m-n} \sampleparammat_{i, j} L_j$.
    Defining $\sampleparammat_i = \begin{bmatrix} \sampleparammat_{i, 1} & \cdots & \sampleparammat_{i, m-n} \end{bmatrix}$, we have $A_i = \sampleparammat_i L$. 
    Repeating the above argument for all $i$ and letting $\sampleparammat = \begin{bmatrix} \sampleparammat_1 & \cdots & \sampleparammat_N \end{bmatrix}^\top$ gives the desired result. 
\end{proof}

Note that Proposition~\ref{prop:null_dim_reduction} reduces the problem of sampling a subspace of $\mathbb{R}^{m}$ to unconstrained sampling of $\R^{m-n}$. % ; it suffices to choose any $\alpha$ and construct its corresponding $A$. 
Compared to prior work~\cite{shen_rapid_2017}, which chose a \emph{preconditioning vector} from $\R^m$ to generate $A$ (without leveraging the subspace $\cal{H}$), this simplification reduces the dimension of the search space. 
For systems with a large base dimension $n$, \revsixteen{this result is especially important, since only the dimensions corresponding to the auxiliary variables need to be sampled}. 

Next, we observe that~\eqref{eq:IHSS} is invariant under scalings of $A$. 

\begin{lemma}
    \label{lemma:scalar_invariance}
    % Let $\calH$, $A$, and $[\uly, \oly]$ be as in~\eqref{eq:IHSS}. 
    Let $H \in \R^{m \times n}$ be full rank, and $AH = 0$. % $\calN(H)$, and $\alpha$ be as in Proposition~\ref{prop:null_dim_reduction}. 
    Then, for any interval $[\uly, \oly]$ and $\lambda \in \R \setminus \{0\}$, we have 
    \begin{equation}
        \label{eq:scalar_invariance}
        % \IHSS{\alpha \calN(H)}([\uly, \oly]) = \IHSS{\lambda \alpha \calN(H)}([\uly, \oly])
        \IHSS{A}([\uly, \oly]) = \IHSS{\lambda A}([\uly, \oly])
    \end{equation}
\end{lemma}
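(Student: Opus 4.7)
The plan is to prove the equality coordinate-wise, by showing that for each $j \in \{1, 2, \ldots, m\}$, the $j$th component of $\IHSS{A}([\uly, \oly])$ and that of $\IHSS{\lambda A}([\uly, \oly])$ are the same intersection. Comparing the two using~\eqref{eq:IHSS}, both are obtained by intersecting $[\uly_j, \oly_j]$ with a collection of interval terms indexed by $i$. The first observation is that the indexing set is unchanged: since $\lambda \ne 0$, we have $(\lambda A)_{i,j} = \lambda A_{i,j} \ne 0$ if and only if $A_{i,j} \ne 0$.

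Next, I would fix such an index $i$ and show that the interval term contributed by $\lambda A$ equals the one contributed by $A$. Writing $I := \sum_{k \ne j} A_{i, k}[\uly_k, \oly_k]$, interval arithmetic gives the identity $c \cdot (X + Y) = cX + cY$ for any scalar $c$ and intervals $X, Y$, as well as $(cd) \cdot X = c \cdot (d \cdot X)$. Applying these,
\begin{align*}
    - \frac{1}{(\lambda A)_{i, j}} \sum_{k \ne j} (\lambda A)_{i, k}[\uly_k, \oly_k]
    &= -\frac{1}{\lambda A_{i,j}} \cdot \lambda I \\
    &= -\frac{1}{A_{i,j}} I,
\end{align*}
which is precisely the $i$th interval term in $\IHSS{A}([\uly, \oly])_j$. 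Therefore each term in the intersection agrees, and so do the intersections themselves, giving~\eqref{eq:scalar_invariance}.

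I do not anticipate a serious obstacle: the argument is a direct computation using the elementary algebraic properties of interval arithmetic (associativity of scalar multiplication and distributivity over the interval sum). The only subtlety to verify explicitly is that these properties hold with the specific signed scalars $\lambda$ and $1/(\lambda A_{i,j})$ that may be negative, but since scalar multiplication of intervals by any real number (positive, negative, or zero) obeys $c \cdot [\ula, \olb] = \{ca : a \in [\ula, \olb]\}$, distributivity and associativity go through without a sign restriction, so no case analysis on the sign of $\lambda$ is required.
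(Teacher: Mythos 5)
Your proof is correct and is essentially an expanded version of the paper's own (one-sentence) justification, which simply observes that the multiplication by $\lambda$ in each $A_{i,k}$ term is exactly offset by the division by $(\lambda A)_{i,j}$. Your additional care in checking that the index set of the intersection is unchanged and that scalar--interval distributivity and associativity hold without sign restrictions is a welcome but minor elaboration of the same argument.
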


This can be quickly seen by examining the form of~\eqref{eq:IHSS}; the multiplication of every $A_{i, k}$ term within the sum is exactly offset by the overall division by $A_{i, j}$. 
% Therefore, we define the equivalence relation $\sim$ on $\R^{m-n} \setminus \{0\}$ by $x \sim y$ if there exists a non-zero $\lambda \in \R$ such that $x = \lambda y$. 
% This equivalence relation reduces the sampling space for the rows of $\alpha$ from all of $\R^{m-n}$ to just the hemisphere of unit radius. 

\edit{Lemma~\ref{lemma:scalar_invariance} further reduces } the sample space to just the positive half of a hypersphere in $\R^{m-n}$.
%In Algorithm~\ref{alg:null_basis_construction}, we explicitly construct $A$ by choosing its rows from this hypersphere. 
We note that if $m-n=1$, then Lemma~\ref{lemma:scalar_invariance} guarantees the existence of a canonical refinement $\IHSS{A}$ since there is only one \edit{vector in the left null space, up to scalings}. 
Thus, by considering every auxiliary variable individually, we may construct a basis consisting of these canonical refinement vectors for each. 
This is the core idea of Algorithm~\ref{alg:null_basis_construction}.

\begin{figure}[!t]
\removelatexerror
\vspace{-2mm}
\begin{algorithm}[H]
\caption{Subspace-based construction of $A$}
% \textbf{Require:} Procedure \textproc{Null}$(H)$ computing an orthonormal \edit{basis for the } left null space of a matrix $H$ \\ 
\textbf{Input:} Full rank $H \in \R^{m \times n}$, subspace sample size $s \in \N$
\label{alg:null_basis_construction}
\begin{algorithmic}[1]
    \State $\begin{bmatrix} H_V \\ H_A \end{bmatrix} \gets H$ \Comment{$H_V \in \R^{n \times n}, H_A \in \R^{(m-n) \times n}$}
    % \State $H_V$, $H_A$ $\gets$ from $H$\footnote{\revtwentytwo{Since $H$ is assumed to be full rank, the first $n$ rows can be assumed to be linearly independent (through reordering of the lifted states, if necessary). We choose $H_{V}$ as the (invertible) $n \times n$ matrix of these rows, and $H_{A} \in \mathbb{R}^{m-n \times n}$ as the remaining rows of $H$. }} \Comment{$H_V \in \R^{n \times n}, H_A \in \R^{(m-n) \times n}$}
    % \State $H^+ \gets \begin{bmatrix} H_V^{-1} & 0_{n \times m-n} \end{bmatrix}$ 

    \Statex {\textcolor{gray}{\textit{$\triangleright$ Construct basis for left null space}}}
    % \State Zero-initialize $L \in \R^{(m-n) \times m}$
    % \For{$i \in \{1, 2, \ldots, m-n\}$}
        % \State $\orthonormnull \gets$ \textproc{Null}$\left(\begin{bsmallmatrix} H_V \\ H_A[i] \end{bsmallmatrix}\right)$ \Comment{$N \in \R^{1 \times n+1}$} 
        % \State $A_1[i, 1:n] = \orthonormnull[1:n]$
        % \State $A_1[i, n+i] = \orthonormnull[n+1]$
        % \State $L[i,1:n] \gets H_A[i,:]H_V^{-1}$
        % \State $L[i,n+i] \gets -1$
    % \EndFor
    \State $L = [-H_AH_V^{-1} \ I_{m-n}]$ \Comment{$LH = 0$} \label{alg:null_basis_construction:basis}
    % \State $A_1 \gets L$  
    
    %\Statex {\textcolor{gray}{\textit{$\triangleright$ Sample 2D subspaces (pairwise linear combinations of basis $A_1$)}}}
    \Statex {\textcolor{gray}{\textit{$\triangleright$ Sample pairwise subspaces}}}
    \State Allocate $A_2 \in \R^{s \cdot {\lvert\perm{m-n}{2}\rvert} \times m}$
    \State $W\revtwentyone{_i} \gets \begin{bmatrix} \cos( \frac{i\pi}{s+1}) & \sin( \frac{i\pi}{s+1} ) & 0 & \cdots & 0 \end{bmatrix}$ for $i \in \{1, 2, \ldots, s\}$ \Comment{$W \in \R^{s \times m-n}$}
    \For{$p \in \perm{m-n}{2}$ } \label{alg:null_basis_construction:subspace_count}
        \State $\edit{\Omega_p} \gets$ Apply permutation $p$ to columns of $W$ \label{alg:null_basis_construction:permutation}
        % \State $n \gets A_1[i] + A_1[j]$ \Comment{$n \in \R^{s \times m}$}
        \State Append $\edit{\Omega_p} L$ to $A_2$
    \EndFor
    \State \Return $\begin{bsmallmatrix} L \\ A_2 \end{bsmallmatrix}$
\end{algorithmic}
\end{algorithm}
\vspace{-8mm}
\end{figure}

\edit{Sampling the full domain of Lemma~\ref{lemma:scalar_invariance} uniformly requires exponentially many samples~\cite[Corollary 4.2.11]{vershynin_high-dimensional_2025}. }
However, submanifolds of this hypersphere can be covered with only polynomially many, since the term $\lvert \perm{m-n}{k} \rvert$ grows polynomially with degree $k$ (Figure~\ref{fig:subspace_sample}). 
Algorithm~\ref{alg:null_basis_construction} implements this method up to degree two. 
Though this could be iterated to higher dimensions, we note that in practice pairwise sampling or even just using the basis points is often sufficient. 

\begin{proposition}
    \label{prop:sample_growth}
    % When Algorithm~\ref{alg:reach} is applied with sample refinement~\eqref{eq:IHSS} using $A$ as generated by Algorithm~\ref{alg:null_basis_construction}, its time complexity is $\calO(1)$ in $n$, and $\calO(x^2)$ in $m-n$.
    For $A \in \R^{N \times m}$ generated by Algorithm~\ref{alg:null_basis_construction}, \revtwentytwo{the number of rows $N$ is worst-case upper bounded by $\calO(s(m-n)^2)$}, and hence the complexity of the refinement operator~\eqref{eq:IHSS} is \revtwentyone{$\calO(sm^2 (m-n)^2)$}.
\end{proposition}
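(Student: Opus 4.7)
The plan is to count the rows of $A$ directly from the construction in Algorithm~\ref{alg:null_basis_construction} and then feed that count into the cost of evaluating formula~\eqref{eq:IHSS}. First, I would observe that $A$ is assembled as the vertical stack $\begin{bsmallmatrix} L \\ A_2 \end{bsmallmatrix}$: the basis block $L$ contributes $m-n$ rows (one per auxiliary variable), and $A_2$ contributes one block of $s$ rows for every element of $\perm{m-n}{2}$. Thus
\begin{equation*}
    N = (m-n) + s \cdot \lvert \perm{m-n}{2} \rvert.
\end{equation*}

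Next I would simplify $\lvert \perm{m-n}{2} \rvert = (m-n)(m-n-1)$, so
\begin{equation*}
    N = (m-n) + s(m-n)(m-n-1) = \calO\bigl(s(m-n)^2\bigr),
\end{equation*}
which gives the first claim. The reason the bound is only worst-case is that for particular $H$ some pairwise rows can collapse or vanish (for instance when the corresponding $2 \times (m-n)$ submatrix of $L$ already has a zero column), but this can only decrease $N$.

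For the second claim, I would turn to~\eqref{eq:IHSS}. To compute $\IHSS{A}([\uly,\oly])$, each coordinate $j \in \{1,\dots,m\}$ requires, in the worst case, one intersection per row $i$ of $A$ (those with $A_{i,j} \neq 0$), and each such intersection involves a sum of at most $m-1$ scaled interval terms. Since each interval arithmetic operation is $\calO(1)$, computing coordinate $j$ costs $\calO(N \cdot m)$, and summing over the $m$ coordinates yields total cost $\calO(m^2 N)$. Substituting the bound on $N$ from the previous step gives $\calO\bigl(s m^2 (m-n)^2\bigr)$, as claimed.

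The argument is essentially bookkeeping, so I do not anticipate a real obstacle; the only subtlety is being explicit that the $\calO$ bounds are worst case (both because of possible sparsity in $A_2$ and because rows of $A$ with $A_{i,j}=0$ contribute nothing to the intersection for coordinate $j$), and that the per-coordinate cost already accounts for the sparsity-induced cancellations that motivate Algorithm~\ref{alg:null_basis_construction} but that we do not exploit in the upper bound.
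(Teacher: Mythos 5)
Your proposal is correct and follows essentially the same route as the paper: count the $m-n$ rows of $L$ plus the $s\cdot\lvert\perm{m-n}{2}\rvert$ rows of $A_2$ to get $N = \calO(s(m-n)^2)$, then multiply by the $\calO(m^2)$ cost of evaluating~\eqref{eq:IHSS} over all $m$ coordinates with $\calO(m)$ interval operations per row. Your added remarks about sparsity only strengthening the worst-case bound are a fair (if optional) clarification.
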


\begin{proof}
    Algorithm~\ref{alg:null_basis_construction} constructs $A$ in two phases. 
    First, it generates a basis $\edit{L} $ (Line~\ref{alg:null_basis_construction:basis}) to parameterize the left null space of $H$ as in Proposition~\ref{prop:null_dim_reduction}.
    By Lemma~\ref{lemma:scalar_invariance}, it suffices consider only \edit{constant } magnitude linear combinations of these basis vectors. 
    Algorithm~\ref{alg:null_basis_construction} considers $s \cdot \lvert \perm{m-n}{2} \rvert$ such vectors, where $s \in \N$ is fixed (Line~\ref{alg:null_basis_construction:subspace_count}). 
    By definition, $\lvert\perm{m-n}{2}\rvert = \frac{(m-n)!}{(m-n-2)!}$, so the size of this second block grows quadratically in $m-n$. 
    Thus, the size $N$ of the overall matrix $A$ grows with the sum of the sizes of its two blocks: $\calO(m-n) + \calO(s(m-n)^2) = \calO(s(m-n)^2)$. 

    \revtwentyone{Finally, note that~\eqref{eq:IHSS} requires computing $N$ bounds for $m$ interval components, each using $m$ multiplications and additions and one division. 
    Therefore, the final complexity is $\calO(m^2N) = \calO(sm^2(m-n)^2)$, as claimed. }
\end{proof}

\revtwentytwo{
Adding base states increases $m$ but keeps $m-n$ constant. 
This causes the complexity of~\eqref{eq:IHSS} to increase quadratically, allowing our technique to scale to systems with many base states. 
When adding auxiliary variables, $m$ increases and $n$ is constant, inducing quartic runtime growth. 
Both cases are significant improvements over the exponential complexity of a dense sampling of the hemisphere in $\R^{m-n}$ (Figure~\ref{fig:subspace_sample}).
}

% Importantly, note that increasing the number of base states $n$ does not increase the required number of samples $N$.
% This simplification is made possible by Proposition~\ref{prop:null_dim_reduction} and is especially important for analyzing high-dimensional systems. 
% This strategy provides a good balance between ensuring enough samples are chosen to receive a good refinement and avoiding the much higher cost of uniform sampling. 

\subsection{Monotonicity With Respect to Extra Rows in $H$}

The specific basis $L$ in Algorithm~\ref{alg:null_basis_construction} is also important. %, as it ensures that when the lifting matrix $H$ is modified by adding an auxiliary variable, the corresponding $A$ matrix remains similar. 
\edit{The $j$-th row $L_j$ corresponds exactly to the canonical vector in the left null space of the matrix $\smallconc{H_V}{(H_A)_j}$, since $L_jH = -\delta_j^TH_AH_V^{-1} H_V + (H_A)_j = 0$.
By isolating each auxiliary variable in its own row, it ensures the corresponding $A$ retains the same rows when new auxiliary variables are added. }
This property is not shared by other methods for generating such a basis, \emph{e.g.} SVD, which prioritize orthonormality. 
% For use in refinement~\eqref{eq:IHSS}, \edit{monotonicity } and sparsity matter more, as they generate tight, monotonic bounds. 
% One way to construct $H$ is add auxiliary variable rows, we should only add rows if we get something for the extra work. 
% We don't care about orthonormality. 
% We care more about consistency and sparsity. 

% Though there are many ways to construct $\calN(H)$, most approaches do not have any sort of consistency under adding lifted variables.

% This ``\edit{monotonicity}'' property ensures that adding a row to $H$ will not remove any rows of $A$ from those previously considered and hence that the refinement will be no worse. 
% The ``consistency'' properties of Algorithm~\ref{alg:null_basis_construction} allow us to ensure non-worsening bounds as we add more auxiliary variables. 

\begin{theorem}
    \label{thm:bound_monotonicity}
    Let $H^0 \in \R^{m_0 \times n}$ be full rank, and $[\uly^0, \oly^0] \in \IR^{m_0}$. 
    For any $h \in \R^{1 \times n}$ and $\underline{\gamma}, \overline{\gamma} \in \R$ with $\underline{\gamma} < \overline{\gamma}$, define 
    \begin{equation*}
        H^1 = \begin{bmatrix} H^0 \\ h \end{bmatrix}, \quad [\uly^1, \oly^1] = \left[ \begin{bmatrix} (\uly^0)^\top & \underline{\gamma} \end{bmatrix}^\top, \begin{bmatrix} (\oly^0)^\top & \overline{\gamma} \end{bmatrix}^\top \right].
    \end{equation*}
    Let $A^0 \in \R^{N_0 \times m_0}$, $A^1 \in \R^{N_1 \times m_0 + 1}$ be given by Algorithm~\ref{alg:null_basis_construction} for $H^0$ and $H^1$, respectively. 
    For $j \in \{1, 2, \ldots, m_0\}$,
    \begin{equation}
        \label{eq:non_worsening_bounds}
        \mathcal{I}_{\mathcal{H}^1, A^1}^{\rm SS} ([\uly^1, \oly^1])_j \subseteq \mathcal{I}_{\mathcal{H}^0, A^0}^{\rm SS}([\uly^0, \oly^0])_j.
    \end{equation}
\end{theorem}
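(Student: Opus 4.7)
The plan is to show that every row of $A^0$ appears (essentially unchanged) as a row of $A^1$ with only an extra zero entry appended in the new $(m_0+1)$-th column; the additional rows in $A^1$ then only strengthen the intersection in~\eqref{eq:IHSS}. This reduces the claim to bookkeeping about how Algorithm~\ref{alg:null_basis_construction} behaves under appending a row to $H$.

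First, I would unfold the algorithm on both inputs. Since $H^0$ and $H^1$ share their top $n\times n$ block, we have $H_V^1 = H_V^0$ and $H_A^1 = \begin{bsmallmatrix} H_A^0 \\ h \end{bsmallmatrix}$, so Line~\ref{alg:null_basis_construction:basis} yields
\begin{equation*}
L^0 = \begin{bmatrix} -H_A^0(H_V^0)^{-1} & I_{m_0-n} \end{bmatrix},\qquad
L^1 = \begin{bmatrix} -H_A^0(H_V^0)^{-1} & I_{m_0-n} & 0 \\ -h(H_V^0)^{-1} & 0 & 1 \end{bmatrix}.
\end{equation*}
Hence the first $m_0-n$ rows of $L^1$ are exactly the rows of $L^0$ with a trailing zero appended. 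Next, for any ordered pair $p\in\perm{m_0-n}{2}\subseteq\perm{m_0-n+1}{2}$, the matrix $\Omega_p$ produced in Line~\ref{alg:null_basis_construction:permutation} has its last column identically zero in the $H^1$-run, so $\Omega_p L^1 = \begin{bmatrix} \Omega_p L^0 & 0 \end{bmatrix}$ (where on the left $\Omega_p$ is truncated to the first $m_0-n$ columns). Consequently, there is an injection $\pi:\{1,\dots,N_0\}\to\{1,\dots,N_1\}$ with $A^1_{\pi(i),k}=A^0_{i,k}$ for all $k\in\{1,\dots,m_0\}$ and $A^1_{\pi(i),m_0+1}=0$.

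Then I would use this correspondence in the definition~\eqref{eq:IHSS}. Fix $j\in\{1,\dots,m_0\}$, and consider any row $i$ of $A^0$ with $A^0_{i,j}\neq 0$. The corresponding row $\pi(i)$ of $A^1$ also satisfies $A^1_{\pi(i),j}\neq 0$, and because $A^1_{\pi(i),m_0+1}=0$ and $[\uly^1_k,\oly^1_k]=[\uly^0_k,\oly^0_k]$ for $k\le m_0$, the term contributed by $\pi(i)$ in $\IHSS{A^1}([\uly^1,\oly^1])_j$ equals the term contributed by $i$ in $\IHSS{A^0}([\uly^0,\oly^0])_j$. Moreover $[\uly^1_j,\oly^1_j]=[\uly^0_j,\oly^0_j]$.

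Therefore the intersection defining $\IHSS{A^1}([\uly^1,\oly^1])_j$ contains every set appearing in the intersection for $\IHSS{A^0}([\uly^0,\oly^0])_j$ (via the rows $\pi(i)$), plus possibly additional sets coming from the new basis row of $L^1$ and from permutations in $\perm{m_0-n+1}{2}\setminus\perm{m_0-n}{2}$. Since intersecting over a larger collection can only shrink the result, the inclusion~\eqref{eq:non_worsening_bounds} follows. The edge case where no row of $A^0$ has $A^0_{i,j}\neq 0$ is harmless: $\IHSS{A^0}([\uly^0,\oly^0])_j=[\uly^0_j,\oly^0_j]$, which already contains any refined set. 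The main obstacle is purely notational, namely setting up the injection $\pi$ so that the identity of contributed terms is transparent; once that is done the subset relation is immediate.
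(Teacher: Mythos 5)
Your proposal is correct and follows essentially the same route as the paper's proof: both hinge on the observation that the block structure of $L$ in Line~\ref{alg:null_basis_construction:basis} and the containment $\perm{m_0-n}{2}\subset\perm{m_0-n+1}{2}$ guarantee every row of $A^0$ reappears in $A^1$ with a zero appended in the new column, so the corresponding terms in~\eqref{eq:IHSS} are unchanged. The only cosmetic difference is that the paper tracks a single binding row $i^*$ attaining the refined bound while you set up the full injection $\pi$ and argue via monotonicity of intersections; your version also handles explicitly the case where the matched row comes from the basis block $L^0$ rather than from $A_2^0$, which the paper glosses over.
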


\begin{proof}
    We prove monotonicity of the lower bound; the upper case is identical. 
    By~\eqref{eq:IHSS}, $\underline{\mathcal{I}}_{\mathcal{H}^1, A^1}^{\rm SS} ([\uly^1, \oly^1])_j \ge \uly_j^0$ always.
    Thus, if $\underline{\mathcal{I}}_{\mathcal{H}^0, A^0}^{\rm SS}([\uly^0, \oly^0])_j = \uly^0$, then \edit{the proof is trivial}. 

    Else, again by~\eqref{eq:IHSS}, there exists an $i^* \in \{1, 2, \ldots, N_0\}$ with 
    \begin{equation*}
        \underline{\mathcal{I}}_{\mathcal{H}^0, A^0}^{\rm SS}([\uly^0, \oly^0])_j = -\frac{1}{A^0_{i^*, j}} \sum_{k \ne i^*} \min\left\{ A_{i^*, k}^0 \uly_k, A_{i^*, k}^0 \oly_k \right\},
    \end{equation*}
    by definition of interval multiplication~\cite{Jaulin:2001}.
    Note that the $m_0-n$ rows of $L^0$ on Line~\ref{alg:null_basis_construction:basis} in Algorithm~\ref{alg:null_basis_construction} are linearly independent by construction, and therefore form a basis for the left null space of $H^0$. 
    Thus, there exist coefficients $\xi_1, \ldots, \xi_{m_0-n}$ with $A_{i^*}^0 = \sum_{i=1}^{m_0-n} \xi_i L^0_i$.
    Since $A_{i^*}^0$ is a row of $A^0$, these $\xi_i$s are a permuted row of $\edit{\Omega}_p$, as in Line~\ref{alg:null_basis_construction:permutation}.

    Consider the linear combination $A_{i^*}^\prime := \sum_{i = 1}^{m_0 - n} \xi_i L^1_i$.
    Since $\perm{m_0-n}{2} \subset \perm{m_0-n+1}{2}$, $A_{i^*}^\prime$ is a row of $A^1$, say $A_\kappa^1$. 
    % By construction, $\calN(H)_{i, m_0+1} = 0$ for all $i \le m_0-n$, so applying~\eqref{eq:IHSS} once more, 
    By construction, $A_{\kappa, q}^1 = A_{i^*, q}^0$ for all $q \in \{1, 2, \ldots, m_0\}$ and (since $L_{i, m_0+1}^1 = 0$ for all $i \le m_0-n$) $A^1_{\kappa, m_0+1} = 0$. 
    Applying~\eqref{eq:IHSS} once more, 
    \begin{align*}
        \underline{\mathcal{I}}&_{\mathcal{H}^1, A^1}^{\rm SS} ([\uly^1, \oly^1])_j \ge -\frac{1}{A^1_{\kappa, j}} \sum_{k \ne \kappa} \min\left\{ A_{\kappa, k}^1 \uly_k, A_{\kappa, k}^1 \oly_k \right\} = \\
        -&\frac{1}{A^0_{i^*, j}} \sum_{k \ne i^*} \min\left\{ A_{i^*, k}^0 \uly_k, A_{i^*, k}^0 \oly_k \right\} = \underline{\mathcal{I}}_{\mathcal{H}^0, A^0}^{\rm SS}([\uly^0, \oly^0])_j,
    \end{align*}
    as desired.
    % Define $f_j : \{1, 2, \ldots, N_0\} \to \R$ by 
    % \begin{equation*}
    %     f_j(i) = -\frac{1}{A^0_{i, j}} \sum_{k \ne i} \min\left\{ A_{i, k}^0 \uly_k, A_{i, k}^0 \oly_k \right\},
    % \end{equation*}
    % and let $i^* \in \argmax_{i} f_j(i)$.
    % By~\eqref{eq:IHSS}, $f_j(i^*) \in \mathcal{I}_{\mathcal{H}^0, A^0}^{\rm SS}([\uly^0, \oly^0])_j$.
\end{proof}

These results enable us to apply Algorithm~\ref{alg:null_basis_construction} to a wide variety of systems. 
% Instead of needing to carefully select auxiliary variables using domain knowledge, as in previous work~\cite{shen_rapid_2017}, we can simply use many more from a ``standard toolbox''. 
Since each additional variable incurs a reasonable computation cost (Proposition~\ref{prop:sample_growth}) and can only improve bounds (Theorem~\ref{thm:bound_monotonicity}), good bounds can be achieved by simply constructing a lifted system with lots of auxiliary variables, rather than carefully selecting them from domain knowledge (as in previous work~\cite{shen_rapid_2017}). 

% Scott's aux vars were known in advance from system properties.
% Here, we can throw a bunch at it and see what works. 

\begin{figure}
    \vspace{1mm} % Fix margin imposition
    \centering
    \includegraphics[width=0.6\linewidth]{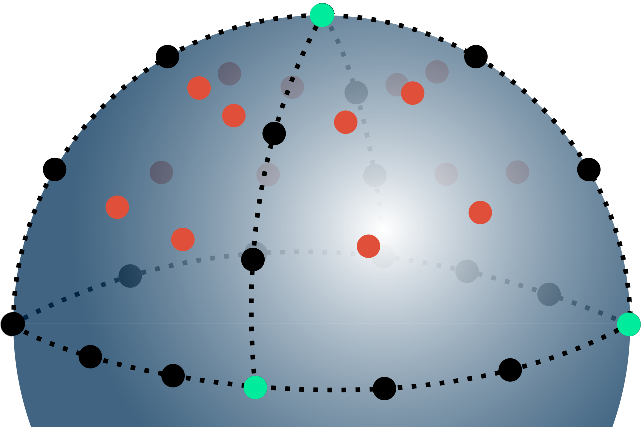}
    \caption{Points sampled on the surface of a hemisphere. 
    Basis samples (corresponding to rows of $L$ in Algorithm~\ref{alg:null_basis_construction}) are shown in green, those along angular subspaces are in black (corresponding to rows of $A_2$ in Algorithm~\ref{alg:null_basis_construction}), and ones distributed over the entire surface in red. 
    \edit{The number of samples needed to densely cover the surface is exponential in dimension.}
    }
    \label{fig:subspace_sample}
\end{figure}

\section{Examples}
\label{sec:examples}

We now demonstrate the efficacy of our approach with several examples\footnote{The source code for these examples is available at \texttt{https://github.com/gtfactslab/Gould\_LCSS2025}.}. 
% The breadth of our examples demonstrates the flexibility of this technique. 
We include a classic nonlinear system, a chemical reaction, and a multi-agent platoon. 
Additionally, we address uncertainty of many types, including unknown initial conditions, parameter values, or persistent disturbances. 
Our contribution reduces the need to manually tune parameters for each of these domains, successfully applying the same calculations in each. 
Each of the simulations in this section is implemented in \kw{immrax}~\cite{harapanahalli_immrax_2024}, an interval arithmetic and reachability analysis module for the Python framework JAX~\cite{jax2018github} supporting GPU parallelization, Just-in-Time Compilation, and Automatic Differentiation.
% They were performed on a system running Kubuntu 22.04.05, with an Intel Xeon Gold 6320 CPU, Quadro RTX 8000 GPU, and 64GB of RAM.
They were performed on an Arch Linux system with an Intel i5-12600K CPU, NVIDIA RTX 3050 GPU, and 64GB of RAM. 
% For each test, we run several trials and report the mean and standard deviation runtimes. 

\subsection{Refinement Procedure Comparison}
\label{subsec:refine_compare}

Consider the Van der Pol oscillator:% , a continuous time system with dynamics 
\begin{equation}
    \label{eq:vanderpol}
    % \begin{aligned}
        \dot{x}_1 = \mu \left(x_1 - \tfrac{1}{3}x_1^3 -x_2\right), \quad \dot{x}_2 = \tfrac{1}{\mu}x_1,
    % \end{aligned}
\end{equation}
for some $\mu > 0$.
Fix a number $\ell \in \N$ of auxiliary variables to introduce, and define the matrix $K \in \R^{\ell \times 2}$ whose $i$th row is $\begin{bmatrix} \cos ( \frac{i * \pi}{\edit{\ell}+1} ) & \sin ( \frac{i * \pi}{\edit{\ell}+1} )\end{bmatrix}$ for $i \in \{1, 2, \ldots, \ell\}$.
\revsixteen{Since only the \emph{ratios} of the coefficients of each row of $A$ affect the refinement (6), there is only one degree of freedom to choose these coefficients for two dimensional systems. 
The matrix $K$ produces rows at even intervals in this space. }
Let 
\begin{equation*}
    H = \begin{bsmallmatrix}
        I_2 \\ 
        K
    \end{bsmallmatrix}, \quad 
    H^+ = \begin{bmatrix}
        I_2 & 0_{2 \times \ell}, 
    \end{bmatrix}
\end{equation*}
and consider the $(H, H^+)$-lifted system of~\eqref{eq:vanderpol}. 
\revsixteen{Note that this lifting can be applied automatically to any 2D system.}

\begin{example}
    \label{ex:lp_ss_compare}
    
    We generate the lifted embedding system~\eqref{eq:lifted_embsys} for~\eqref{eq:vanderpol} to over-approximate the reachable set from initial conditions $x_0 \in \left[\begin{bmatrix} 0.9 & -0.1 \end{bmatrix}^\top, \begin{bmatrix} 1.1, 0.1 \end{bmatrix}^\top\right]$ for time $t \in [0, 2\pi]$.
    \edit{We compare the refinements $\IHLP$ given in~\eqref{eq:IHLP_lower}-\eqref{eq:IHLP_upper} and $\IHSS{A}$ based on Algorithm~\ref{alg:null_basis_construction} } \revtwentyone{(with $s=10$)}. 
    % First, we do this with the refinement $\IHLP$ given in~\eqref{eq:IHLP_lower}-\eqref{eq:IHLP_upper}. 
    % Next, we construct $A$ as in Algorithm~\ref{alg:null_basis_construction} \revtwentyone{with $s=10$ } and use the refinement $\IHSS{A}$ in the same computation. 
\end{example}

Table~\ref{table:refinement_comp} compares the average runtime and final bound size of both refinements for multiple values of $\ell$. 
Note that the bound generated by $\IHLP$ is always smaller than that of $\IHSS{A}$ for the same $\ell$. 
However, $\IHSS{A}$ scales better and may generate tighter bounds than $\IHLP$ in a shorter time with a larger $\ell$. 

\begin{table}
\centering
\vspace{1mm} % fix margin imposition
\begin{tabular}{|c||c|c||c|c|}
    \multicolumn{3}{c}{Using Refinement $\IHLP$} \\
    \hline % \hline
    $\ell$ & Time (sec) & Final Bound Size \\
    \hline
    2 & $5.78 \pm 1.02 \mathrm{e}{-1}$ & 29.41 \\
    4 & $14.38 \pm 0.269$ & 3.991 \\
    6 & $23.55 \pm 2.63\mathrm{e}{-1}$ & 1.322 \\
    % 8 & $40.66 \pm 0.1066$ & 0.2786 \\
    \hline
    \multicolumn{3}{c}{Using Refinement $\IHSS{A}$} \\ 
    \hline
    $\ell$ & Time (sec) & Final Bound Size \\ 
    \hline
    2 & $2.62 \mathrm{e}{-2} \pm 2.98 \mathrm{e}{-3}$ & 29.41 \\ 
    4 & $2.13 \mathrm{e}{-2} \pm 2.48 \mathrm{e}{-3}$ & 4.780 \\ 
    6 & $2.71 \mathrm{e}{-2} \pm 7.81 \mathrm{e}{-3}$ & 1.659 \\ 
    % 8 & $0.1287 \pm 1.838 \cdot 10^{-2}$ & 0.2750 \\
    \hline
\end{tabular}
\caption{Comparison of refinements $\IHLP$ and $\IHSS{A}$ as used in Example~\ref{ex:lp_ss_compare}.
Sample size of 10 runs, \revtwentytwo{each numerically integrated from $t=0$ to $t=2\pi$ with a timestep of $0.01$ using the \texttt{tsit5} solver}.}
\label{table:refinement_comp}
\end{table}

The scaling of $\IHSS{A}$ is further supported by the computational capabilities of JAX. 
Computing~\eqref{eq:IHSS} requires repeated evaluations of the same formula. 
The JAX function transformation \kw{vmap} can \emph{vectorize} this, computing many such results in one arithmetic operation. 
Furthermore, the transformation \kw{jit} compiles Python functions into XLA which is executed on a GPU for parallelism. 

% \begin{enumerate}
%     \item Show unrefined interval trajectory, should be huge
%     \item Show randomly refined interval trajectory for a range of numbers of lifted vars. 
%     Should ``peak'' (have smallest bounds) around middle number of lifted vars. 
%     \item Show LP refined interval trajectory. 
%     Should be smaller than random samples, continue shrinking for more aux vars. 
%     \item Show subspace sample refined trajectory. 
%     Will be almost as small as LP and also continue shrinking for more aux vars. 
%     Can be computed much faster. 
%     \begin{itemize}
%         \item Importantly, this means we can get tighter bounds by subspace sampling than LP in the \emph{same amount of time} if we run the subspace sampling algorithm with \emph{more lifted variables}
%     \end{itemize}
%     
% \end{enumerate}

\subsection{Enzymatic Reaction}
\label{subsec:chem}

% Consider the following system describing the dynamics of an enzymatic reaction
Consider the dynamics of an enzymatic reaction from~\cite{shen_rapid_2017}: 
\begin{subequations}
\label{eq:chem}
\begin{align}
\dot{x}_{\mathrm{A}} & =-k_1 x_{\mathrm{A}} x_{\mathrm{F}}+k_2 x_{\mathrm{F}: \mathrm{A}}+k_6 x_{\mathrm{R}: \mathrm{A}^{\prime}}, \\
\dot{x}_{\mathrm{F}} & =-k_1 x_{\mathrm{A}} x_{\mathrm{F}}+k_2 x_{\mathrm{F}: \mathrm{A}}+k_3 x_{\mathrm{F}: \mathrm{A}}, \\
\dot{x}_{\mathrm{F}: \mathrm{A}} & =k_1 x_{\mathrm{A}} x_{\mathrm{F}}-k_2 x_{\mathrm{F}: \mathrm{A}}-k_3 x_{\mathrm{F}: \mathrm{A}}, \\
\dot{x}_{\mathrm{A}^{\prime}} & =k_3 x_{\mathrm{F}: \mathrm{A}}-k_4 x_{\mathrm{A}^{\prime}} x_{\mathrm{R}}+k_5 x_{\mathrm{R}: \mathrm{A}^{\prime}}, \\
\dot{x}_{\mathrm{R}} & =-k_4 x_{\mathrm{A}^{\prime}} x_{\mathrm{R}}+k_5 x_{\mathrm{R}: \mathrm{A}^{\prime}}+k_6 x_{\mathrm{R}: \mathrm{A}^{\prime}}, \\
\dot{x}_{\mathrm{R}: \mathrm{A}^{\prime}} & =k_4 x_{\mathrm{A}^{\prime}} x_{\mathrm{R}}-k_5 x_{\mathrm{R}: \mathrm{A}^{\prime}}-k_6 x_{\mathrm{R}: \mathrm{A}^{\prime}}.
\end{align}
\end{subequations}
We assume the initial concentration of each reagent is given: $x_0 = \begin{bmatrix} 34 & 20 & 0 & 0 & 16 & 0 \end{bmatrix}^\top$, and that the parameter $k$ is known within an order of magnitude: $k \in [\hat{k}, 10 \hat{k}]$, where $\hat{k} = \begin{bmatrix} 0.1 & 0.033 & 16 & 5 & 0.5 & 0.3 \end{bmatrix}^\top$. 

\begin{example}[{\cite[Example 3]{shen_rapid_2017}}]
    \label{ex:chem}
    Using the lifting matrix 
    \begin{gather*}
        K = \begin{bsmallmatrix}
        -0.48 & -0.14 & -0.62 & -0.48 & 0.24 & -0.24 \\ 
        -0.31 & 0.75 & 0.43 & -0.31 & 0.15 & -0.15 \\
        0 & 0 & 0 & 0 & 0.70 & 0.70 % \\
        % -1 & 1 & 0 & 0 & 0 & 0 
        \end{bsmallmatrix}, \quad
        H = \begin{bsmallmatrix} I_6 \\ K \end{bsmallmatrix} ,
    \end{gather*}
    from~\cite{shen_rapid_2017} with the system~\eqref{eq:chem}, we compute $A$ as in Algorithm~\ref{alg:null_basis_construction} \revtwentyone{with $s=10$ } and use $\IHSS{A}$ to compute reachable set approximations on the time interval $t \in [0, 0.04]$.
    \revsixteen{This $K$ is motivated by chemical properties of the system guaranteeing that each auxiliary variable it introduces is invariant. }
    \revtwentyone{Therefore, we explicitly have $\dot{y}_j = 0$ in~\eqref{eq:lsys} for $j \in \{7, 8, 9\}$.}
\end{example}
The bounds on two of the state variables produced by Example~\ref{ex:chem} are visualized in Figure~\ref{fig:chem}. 
\revtwentyone{
We recover similar bounds to~\cite[Figure 5]{shen_rapid_2017}.
Our bounds rely on Algorithm~\ref{alg:null_basis_construction} and expressing invariance in the lifted system, rather than the radius based preconditioning and symbolic simplifications performed in that work. 
}
% Comparing the bounds to those of~\cite[Figure 5]{shen_rapid_2017}, we note that we recover identical bounds while successfully avoiding the need for a manual ``preconditioning'' step in our analysis. 
% TODO: Runtime comparison is difficult, since he doesn't list integration parameters. 

\begin{figure}
\vspace{1mm} % fix margin imposition
    \centering
    \includesvg[width=\linewidth]{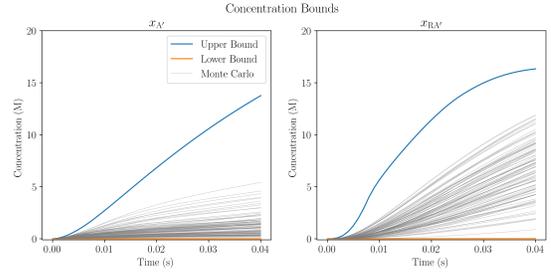} 
    \caption{State bounds for $x_{\mathrm{A}^\prime}$ (left) and $x_{\mathrm{RA}^\prime}$ (right) from Example~\ref{ex:chem}. 
    Gray lines are Monte-Carlo sampled real solutions.  
    \revtwentytwo{The bounds and each sample trajectory were integrated with the \texttt{tsit5} solver and a timestep of $10^{-3}$.}
    }
    \label{fig:chem}
\end{figure}

% Though our approach did not improve the 
% 
% Joe Scott has really good bounds - there are some systems where which null vectors to choose are obvious from domain knowledge. 
% We solve that problem by letting you choose a lot and apply to general systems 

\subsection{Multi-Agent Platoon}
\label{subsec:platoon}
Consider the problem of stabilizing a double integrator subject to an additive input disturbance $w_x^1, w_y^1 \in [-0.01, 0.01]$ while avoiding an obstacle:
\begin{equation}
\label{eq:platoon_leader_dynamics}
\begin{aligned}
    \dot{p}_x^1 &= v_x^1, & \dot{p}_y^1 &= v_y^1, \\ 
    \dot{v}_x^1 &= \sigma(u_x, u_{\mathrm{max}}) + w_x^1, & \dot{v}_y^1 &= \sigma(u_y, u_{\mathrm{max}}) + w_y^1.
\end{aligned}
\end{equation}
% The obstacle is parameterized 
Here, $\sigma(\cdot, \cdot)$ is the \emph{softmax} function; we choose $u_{\mathrm{max}}=5$.
For initial state $p_x^1 = 8$, $p_y^1 = 7$, $v_x^1 = -\sqrt{3}$, and $v_y^1 = -1$, we use the Python library \kw{casadi} to generate feedforward inputs $u_{ff} := \{\begin{bmatrix} u_x(t) & u_y(t) \end{bmatrix}^\top\}_{t \in [t_0, t_f]}$ that minimize a quadratic form of position and control. 
Obstacle avoidance (with an additional 30\% padding) is a hard constraint. 

This system can be easily extended to a platoon of arbitrary length $P \in \N$ by appending states with dynamics
\begin{equation}
\label{eq:platoon_follower_dynamics}
\begin{alignedat}{2}
    \dot{p}_x^i &= v_x^i, && \dot{p}_y^i = v_y^i, \\
    \dot{v}_d^i &= k_p \left( p_d^{i-1} - p_d^i - r \frac{v_d^{i-1}}{\lVert v^{i-1} \rVert} \right) && + k_v(v_d^{i-1}-v_d^i) + w_d^i, \\
    % \dot{v}_y^i &= k_p \left( p_y^{i-1} - p_y^i - r \frac{v_y^{i-1}}{\lVert v^{i-1} \rVert} \right) && + k_v(v_y^{i-1}-v_y^i) + w_y^i,
\end{alignedat}
\end{equation}
for all $i \in \{2, \ldots, P\}$ and $d \in \{x, y\}$. 
Here, each additional agent is implementing PD control to track the proceeding one with parameters $k_p = k_v = 5$.
\revtwentytwo{To avoid collisions, followers do not track the exact position of the proceeding agent, but aim to follow at a distance of $r = 0.5$}.
The follower agents start with the same velocity as the leader and offset position $p_x^i = p_x^1 + 0.2 \sqrt{3} (i-1)$, $p_y^i = p_y^1 + 0.2 (i-1)$.
Finally, the control error of the follower agents is assumed to be similarly bounded: $w_x^i, w_y^i \in [-0.01, 0.01]$.

\begin{example}
    \label{ex:platoon}
    We introduce the lifting matrix
    \begin{gather*}
        L = \begin{bsmallmatrix}
            \basisvec{4(i-1) + 1}^\top + \basisvec{4(i-1)+3}^\top \\
            \basisvec{4(i-1) + 2}^\top + \basisvec{4(i-1)+4}^\top
        \end{bsmallmatrix}, \quad H = \begin{bsmallmatrix} I_{4P} \\ L_1 \\ \vdots \\ L_P \end{bsmallmatrix}, 
    \end{gather*}
    to give the $(H, H^+)$ lifted system for~\eqref{eq:platoon_leader_dynamics}-\eqref{eq:platoon_follower_dynamics}.
    \revsixteen{%
    Leveraging the insight from~\cite[Example 2]{harapanahalli_certified_2024}, it can be advantageous to introduce halfspaces correlating position and velocity terms.
    % Leveraging the general insight that position and velocity tend to be correlated, this $H$ approximates reachable sets with polytopes that exhibit such correlation. 
    }
    Constructing $A$ as in Algorithm~\ref{alg:null_basis_construction} \revtwentyone{with $s=10$}, we use the lifted embedding system~\eqref{eq:lifted_embsys} with refinement $\IHSS{A}$ to \edit{bound } the platoon's reachable sets on the time interval $t \in [0, 3]$.
\end{example}

The reachable set bounds computed in Example~\ref{ex:platoon} are visualized in Figure~\ref{fig:platoon} for $P=6$.
Interval refinement tightens these bounds and allows us to verify that every agent in the platoon will avoid the obstacle. 
\begin{figure}
\vspace{1mm} % fix margin imposition
    \centering
    \includesvg[width=\linewidth]{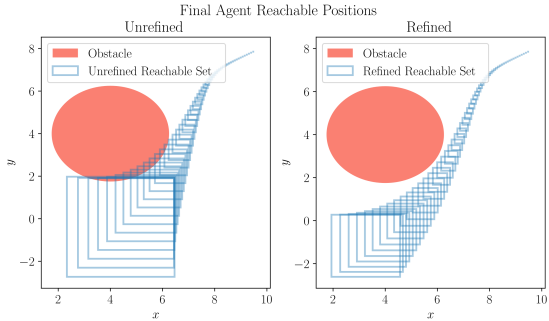}
    \caption{Interval bounds on the position of the final agent in the platoon of Example~\ref{ex:platoon} ($P=6$).
    Bounds produced \emph{without} a refinement operator are shown on the left, and cannot be verified as safe, since they intersect the obstacle. 
    Bounds produced using the refinement operator~\eqref{eq:IHSS} with Algorithm~\ref{alg:null_basis_construction} (shown on the right) are tightened, no longer intersect the obstacle, and verify the safety of the platooned system.  }
    \label{fig:platoon}
\end{figure}

This example demonstrates the scalability of our method. 
The platooned system contains $4P$ base states, and $2P$ auxiliary states are added for a total of $6P$ lifted states. 
Despite the inherent high-dimensionality of multi-agent systems, our method is able to efficiently produce useful bounds on reachable sets. 
Table~\ref{table:platoon_scaling} lists the time taken to compute embedding trajectories for both the base and lifted systems across several values of $P$, characterizing runtime growth in platoon size. 
\revsixteen{%
We additionally compare to zonotopic bounds produced by the CORA toolbox~\cite{CORA}.
Initially, our method verifies safety faster. 
Due to the many auxiliary variables being used, our runtime eventually grows too quickly, and becomes slower than the zonotopic method.
}

\begin{table}
\centering
% \begin{tabular}{|c||c|c||c|c|}
%     % \multicolumn{1}{c}{} & \multicolumn{2}{c}{First-Order $\IHSS{A}$} & \multicolumn{2}{c}{Second-Order $\IHSS{A}$} \\ 
%     \multicolumn{3}{c}{Unrefined} \\
%     \hline % \hline
%     $P$ & Time (sec) & Final Bound Size \\
%     \hline
%     3 & $1.299 \cdot 10^{-2} \pm 2.115 \cdot 10^{-3}$ & $1.176 \cdot 10^{-2}$ \\
%     6 & $1.455 \cdot 10^{-2} \pm 1.241 \cdot 10^{-3}$ & $0.5533$ \\
%     9 & $2.476 \cdot 10^{-2} \pm 2.997 \cdot 10^{-3}$ & $6.399$ \\
%     12 & $2.845 \cdot 10^{-2} \pm 2.019 \cdot 10^{-3}$ & Overflow \\
%     \hline
%     \multicolumn{3}{c}{Using Refinement $\IHSS{A}$} \\ 
%     \hline 
%     $P$ & Time (sec) & Final Bound Size \\ 
%     \hline
%     3 & $2.919 \cdot 10^{-2} \pm 2.283 \cdot 10^{-3}$ & $1.823 \cdot 10^{-3}$ \\ 
%     6 & $1.966 \pm 3.102 \cdot 10^{-2}$ & $7.464 \cdot 10^{-2}$ \\ 
%     9 & $10.28 \pm 0.1720$ & $0.9302$ \\ 
%     12 & $35.02 \pm 0.1596$ & $9.722$ \\ 
%     \hline
% \end{tabular}
\footnotesize
\begin{tabular}{|c||c|c|c|}
\hline 
$P$ & Unrefined & Refinement $\IHSS{A}$ & CORA \\ 
\hline 
3 & $1.30 \mathrm{e}{-2} \pm 2.1 \mathrm{e}{-3}$ & $2.92 \mathrm{e}{-2} \pm 2.3 \mathrm{e}{-3}$ & 3.11 \\ 
6 & $1.46 \mathrm{e}{-2} \pm 1.2 \mathrm{e}{-3}$ & $1.97 \pm 3.1 \mathrm{e}{-2}$ & 5.56 \\ 
9 & $2.48 \mathrm{e}{-2} \pm 3.0 \mathrm{e}{-3}$ & $10.3 \pm 0.17$ & 10.57 \\ 
12 & $2.85 \mathrm{e}{-2} \pm 2.0 \mathrm{e}{-3}$ & $35.0 \pm 0.16$ & 12.85 \\
\hline
\end{tabular}
\caption{Scaling of refinement runtime with platoon size. 
Unrefined interval bounds are compared to the subspace sampling refinement strategy of Algorithm~\ref{alg:null_basis_construction} and a zonotope-based method. 
\edit{The refined interval bounds and CORA verify platoon safety; unrefined bounds do not. }
% Both the average computation time and area of the bounds on the final position of the last agent are shown. 
Sample size of 10 runs, \revtwentytwo{each numerically integrated from $t=0$ to $t=3$ with a timestep of $0.01$ using Euler integration}.
\edit{The reported times for CORA were recorded by MATLAB's \texttt{timeit} command.}}
\label{table:platoon_scaling}
\end{table}

\section{Conclusion}
\label{sec:conclusion}

We have presented a novel strategy for parameter conditioning in interval refinement reachability analysis. 
Our method exhibits efficient runtime growth and monotonic bound sizes in the number of auxiliary variables used, enabling its application to a wide variety of systems without the use of domain knowledge. 
Interesting areas for future work include \edit{further automation of the selection of $H$}. 
In particular, a promising approach is to apply the automatic differentiability inherent to JAX with the interval refinement toolbox~\cite{harapanahalli_immrax_2024} to learn an optimal $H$ through gradient descent.
%In particular, a promising approach is to apply the automatic differentiability inherent to JAX with the interval refinement toolbox~\cite{harapanahalli_immrax_2024} to learn an optimal $H^+$ through gradient descent. 

% Automatic selection of $H^+$, probably by ML. 
% Train mapping from state to best $H^+$? 

\addtolength{\textheight}{-12cm}   % This command serves to balance the column lengths
                                  % on the last page of the document manually. It shortens
                                  % the textheight of the last page by a suitable amount.
                                  % This command does not take effect until the next page
                                  % so it should come on the page before the last. Make
                                  % sure that you do not shorten the textheight too much.

%%%%%%%%%%%%%%%%%%%%%%%%%%%%%%%%%%%%%%%%%%%%%%%%%%%%%%%%%%%%%%%%%%%%%%%%%%%%%%%%

\bibliographystyle{ieeetr}
\bibliography{gould_ref,extra}

\end{document}